\newtheorem{theorem}{Theorem}
\newtheorem{proof}{Proof}
\newcommand{\qed}{\hfill\ensuremath{\blacksquare}}
\def \VersionWithComments {}
\newcommand{\marginX}{\marginnote{\huge{\quad\quad\textbf{!}\quad\quad}}}
\newcommand{\cyr}[1]{\mbox{}{\color{green!50!black}\marginX{}\textbf{[Yean-Ru}: #1]}}
\newcommand{\lsw}[1]{\mbox{}{\color{orange}\marginX{}\textbf{[Shang-Wei}: #1]}}
\newcommand{\instructions}[1]{{\color{red}\marginX{}\textbf{[Instructions: ``#1'']}}}
\newcommand{\reviewer}[2]{\mbox{}{\color{red}\marginX{}\textbf{[Reviewer #1}: ``#2'']}}
\newcommand{\todo}[1]{\mbox{}{\color{blue}{\marginX{}\textbf{TODO}\ifx#1\\\else:\ \fi #1}}} 
\newcommand{\instructions}[1]{}
\newcommand{\cyr}[1]{}
\newcommand{\lsw}[1]{}
\newcommand{\reviewer}[2]{}
\newcommand{\todo}[1]{}
\def\BibTeX{{\rm B\kern-.05em{\sc i\kern-.025em b}\kern-.08em
    T\kern-.1667em\lower.7ex\hbox{E}\kern-.125emX}}
\begin{document}

\title{A Quantum SMT Solver for Bit-Vector Theory}

 \author{
 
 \IEEEauthorblockN{Shang-Wei Lin}
 \IEEEauthorblockA{\textit{Nanyang Technological University, Singapore} \\
 shang-wei.lin@ntu.edu.sg}
 \and
 \IEEEauthorblockN{Si-Han Chen, Tzu-Fan Wang and Yean-Ru Chen}
 \IEEEauthorblockA{\textit{National Cheng Kung University, Taiwan} \\
 chenyr@mail.ncku.edu.tw}
}

\maketitle
\thispagestyle{plain}
\pagestyle{plain}

\begin{abstract}

Given a formula $F$ of satisfiability modulo theory (SMT), the classical SMT solver tries to (1) abstract $F$ as a Boolean formula $F_B$, (2) find a Boolean solution to $F_B$, and (3) check whether the Boolean solution is consistent with the theory. Steps~{(2)} and (3) may need to be performed back and forth until a consistent solution is found. In this work, we develop a quantum SMT solver for the bit-vector theory. With the characteristic of superposition in quantum system, our solver is able to consider all the inputs simultaneously and check their consistency between Boolean and the theory domains in one shot.
\end{abstract}

\begin{IEEEkeywords}
quantum computing, SMT solver, Grover's algorithm
\end{IEEEkeywords}

\section{Introduction} \label{sec:Introduction}

Satisfiability modulo theory (SMT) is the problem of determining the satisfiability of a first-order formula with respect to a decidable first-order theory. Compared to Boolean satisfiability problem (SAT), the atoms in SMT can be in different forms depending on the theories they base on. With this richer expressiveness, SMT is widely used in plenty of applications, particularly in formal verification, ranging from recent topics like neural networks \cite{KB17} \cite{AW21} and smart contracts \cite{AR18} \cite{AG20} to hardware circuit designs \cite{FS17} \cite{AF18}. Consider the following formula $F$, based on the theory of fixed-width bit-vector, consisting of three integer variables $a$, $b$ and $c$, each of which is $2$-bit long.
\[
F: [(a > b) \vee (a < b) \vee (a = b)] \wedge [\neg(a > b) \vee \neg(a < b) \vee \neg(a = b)]
\]

Formula $F$ is {\em satisfiable} as there exists an assignment $a = 2$ and $b = 2$, with binary representation $10$ and $10$, respectively, making $F$ evaluate to True. Such an assignment is called a {\em solution}. If there does not exist any solution, the formula is called {\em unsatisfiable}.

To solve a SMT problem, most of state-of-the-art SMT solvers adopt the classical {\em lazy approach}~\cite{S07}, which consists of three steps: (1) abstract the original formula as a Boolean formula, i.e., make an abstraction from the original theory domain to the Boolean domain, (2) find a solution to the abstract Boolean formula, and (3) check the consistency between Boolean and the original theory domains. If the solution found in step~{($2$)} is not consistent in both domains, it will be abandoned, and step~{(2)} continues until a consistent solution is found, or the formula is concluded unsatisfiable.


For instance, if we use Boolean variables $x$, $y$, and $z$ to denote the predicate $(a > b)$, $(a < b)$, and $(a = b)$, respectively, formula $F$ can be abstracted as the following Boolean formula $F_B$:
\[
F_B: (x \vee y \vee z) \wedge (\neg x \vee \neg y \vee \neg z)
\]
An SAT solver can be used to find a solution to the Boolean formula $F_B$, and a bit-vector theory solver can help to check the consistency. Fig.~\ref{fig:SMT_lazy_example} illustrates how the lazy approach solves formula $F$. In the first three iterations, SAT solver finds an assignment but immediately rejected by the theory solver. For example, in iteration~$1$, SAT solver finds a solution $x=0 \wedge y=1 \wedge z=1$ to formula $F_B$, which is rejected by the theory solver because $(a < b)$ and $(a = b)$ cannot hold simultaneously. This back and forth process continues two more iterations, and a consistent solution is found in iteration~$4$.

\begin{figure}[tb]
    \centering
    \includegraphics[width=0.98\linewidth]{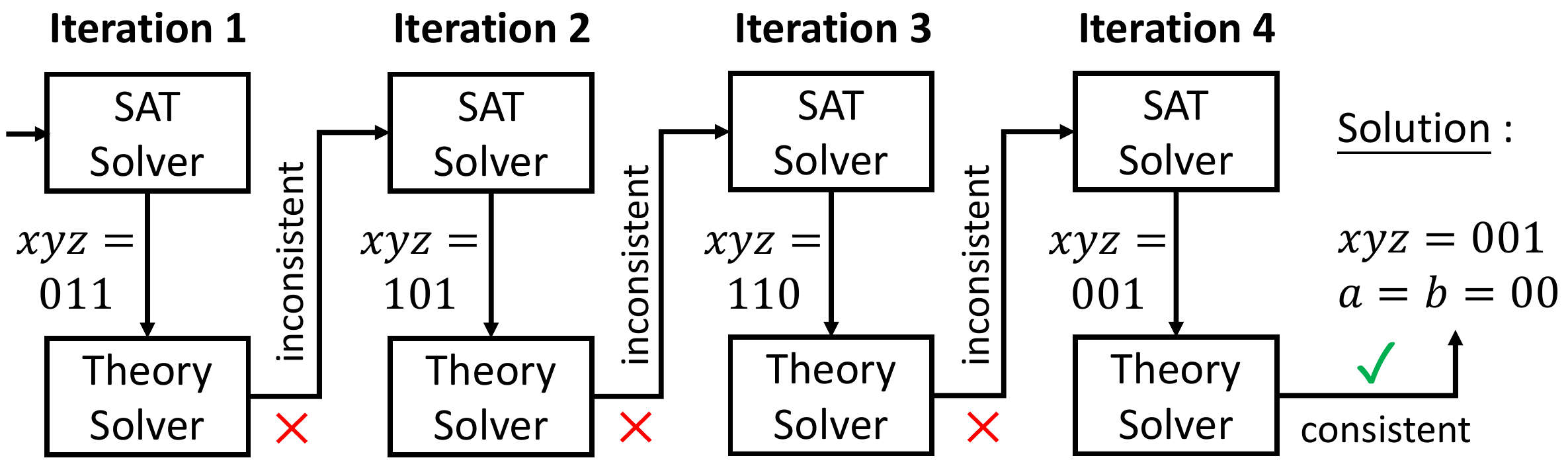}
    \caption{SMT solving by the classical lazy approach}
    \label{fig:SMT_lazy_example}
\end{figure}


One can observe that each iteration of this back and forth process only checks the consistency of one Boolean solution (against the theory) at a time. If the search space is huge, this approach becomes unscalable. Indeed, there is an {\em online} version of the lazy approach, which adopts heuristics, making it more efficient. However, the online version does not reduce the theoretical time complexity.


In recent years, quantum technology is widely used in many applications to solve traditionally difficult problems by taking advantage of the nature of superposition and entanglement in quantum systems. Grover’s algorithm~\cite{G96}, one of the famous quantum algorithms, helps to search target objects among a huge search space. There are two essential components in Grover's algorithm: (1) an {\em oracle}, and (2) the {\em diffuser}. In a nutshell, the oracle answers the ``yes/no'' question about whether an object in the search space is our target, while the diffuser tries to increase or maximize the probability of the target's being measured. The details of Grover's algorithm is briefed in Section~\ref{sec:Preliminary}.

To use Grover's algorithm for a search problem, the key is to provide the oracle. As long as the oracle can correctly identify the targets among the search space, the diffuser, which is standard and independent from the search problem, can help to ``extract'' the targets. In this work, we develop a quantum SMT solver based on Grover's algorithm. More specifically, given a SMT formula, we develop a methodology to generate the oracle required by Grover's algorithm to search the solutions to the SMT formula.

Let us take formula $F$ as an example again. Assume the two variables $a$ and $b$, each of which is $2$-bit long, are represented by binary strings $a_1 a_2$ and $b_1 b_2$, respectively. Together with the abstract Boolean formula $F_B$, we can use seven bits to represent each element $|v\rangle$ in the search space, as a column vector $|x,y,z,a_1,a_2,b_1,b_2\rangle$. There are requirements for $|v\rangle$ to be a solution of formula $F$. Firstly, in the Boolean domain, $x$, $y$, $z$  have to satisfy the following condition:
\vspace{-1mm}
\begin{equation} \label{eq:Boolean}
F_B(x,y,z) = (x \vee y \vee z) \wedge (\neg x \vee \neg y \vee \neg z) = 1
\end{equation}

Secondly, in the bit-vector theory domain, $a_1$, $a_2$, $b_1$, $b_2$ need to be consistent with their Boolean abstraction $x$, $y$, $z$. Thus, they have to satisfy the following three conditions:
\vspace{-1mm}
\begin{eqnarray}
(x = 1 \iff a_1 a_2 > b_1 b_2) \vee  (x = 0 \iff a_1 a_2 \not> b_1 b_2) \\
(y = 1 \iff a_1 a_2 < b_1 b_2) \vee  (y = 0 \iff a_1 a_2 \not< b_1 b_2) \\
(z = 1 \iff a_1 a_2 = b_1 b_2) \vee  (z = 0 \iff a_1 a_2 \neq b_1 b_2)
\end{eqnarray}

Here comes the interesting and critical part. If we can construct an oracle $\Psi$, which can help us to take care of the four conditions simultaneously, then the diffuser can proceed to extract the solution for us. That is, our oracle $\Psi$ does the following:

\begin{displaymath}
\Psi(|v\rangle) = \left\{ 
\begin{matrix}
 -1 \cdot |v\rangle & \mbox{if} & (1) \wedge (2) \wedge (3) \wedge (4) \\
 |v\rangle & \mbox{otherwise} &
\end{matrix}
\right.
\end{displaymath}

If an input $|v\rangle$ is a solution to $F$, our oracle adds a ``$-1$'' phase to it; otherwise, $|v\rangle$ is not changed. Since the input $|v\rangle$ can be placed in superposition, representing every possible input, our oracle is able to identify all the solutions to $F$ in one shot. Then, the diffuser recognizes those inputs with a ``$-1$'' phase and tries to increase/maximize their probability to be measured.
Fig.~\ref{fig:Oracle_High-level}~{(b)} shows all the $16$ solutions to formula $F$, and our oracle is able to identify all of them among the $128$ inputs in one shot.


Fig.~\ref{fig:Oracle_High-level}~{(a)} shows the structure of our oracle design and summarizes our contributions. There are four components in our oracle:
\begin{enumerate}
\item {\bf SAT Circuit} (c.f.~Section~\ref{subsec:SAT-Circuit-Degisn}) determines the solutions for the Boolean domain, i.e., Equation~(\ref{eq:Boolean}) in our example.

\item {\bf Theory Circuit} (c.f.~Section~\ref{subsec:Theory-Circuit}) determines the truth value of predicates (e.g., $a < b$, $a > b$, $a = b$ in our example) in the theory domain with respect to different inputs.

\item {\bf Consistency Extractor} (c.f.~Section~\ref{subsec:Consistency_Extractor}) identifies those solutions that are consistent in both Boolean and the theory domains, i.e., $(1) \wedge (2) \wedge (3) \wedge (4)$ in our example.

\item {\bf Solution Inverser} (c.f.~Section~\ref{subsec:solution_inverter}) inverses the solutions to the SMT formula by adding a ``-1'' phase to them for the diffuser's further processing.
\end{enumerate}

\begin{figure}[tb]
\begin{minipage}{0.6\linewidth}
\centering
\includegraphics[width=0.9\linewidth]{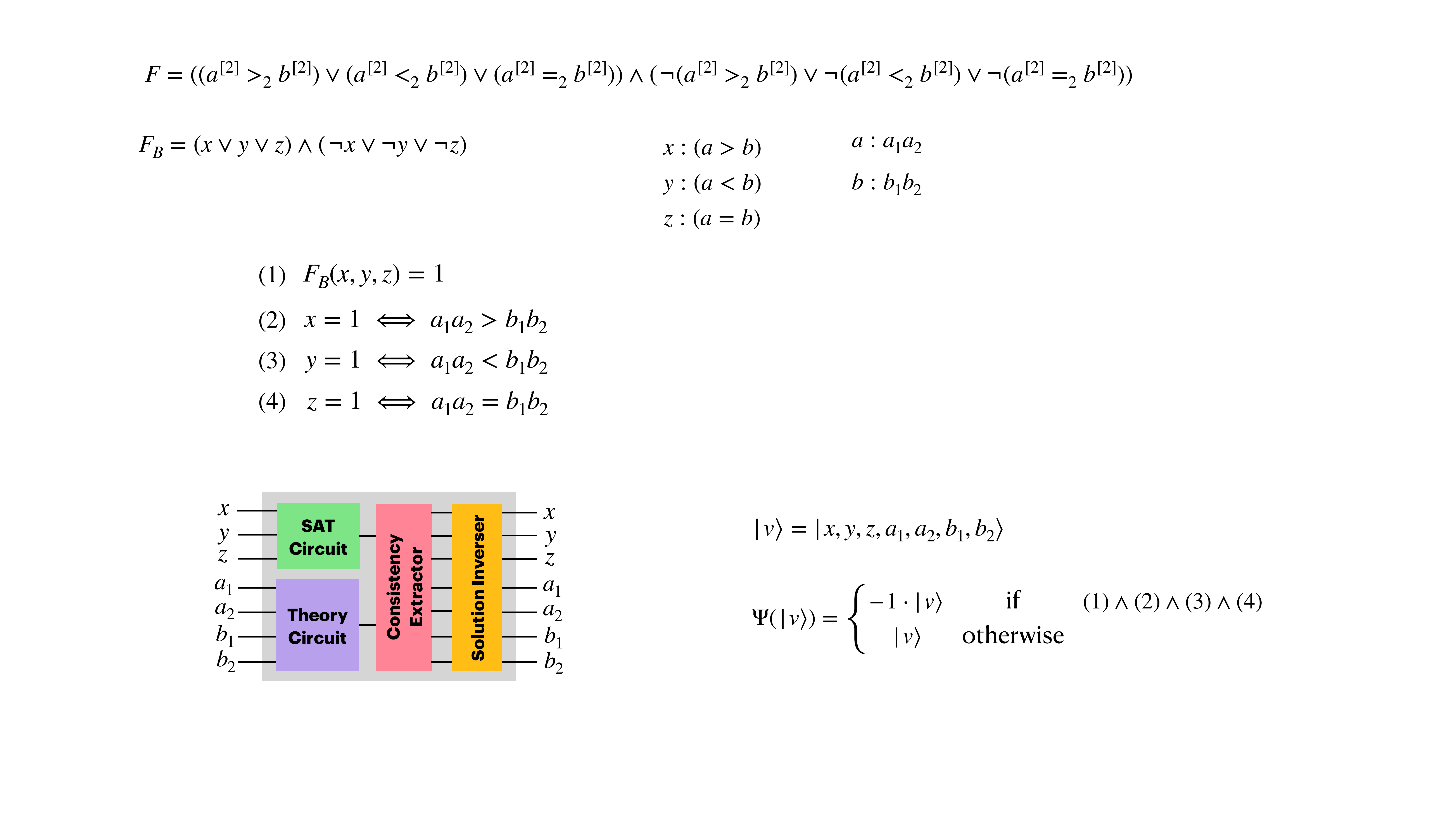}
\end{minipage}
~
\begin{minipage}{0.35\linewidth}
\centering
\includegraphics[width=0.95\linewidth]{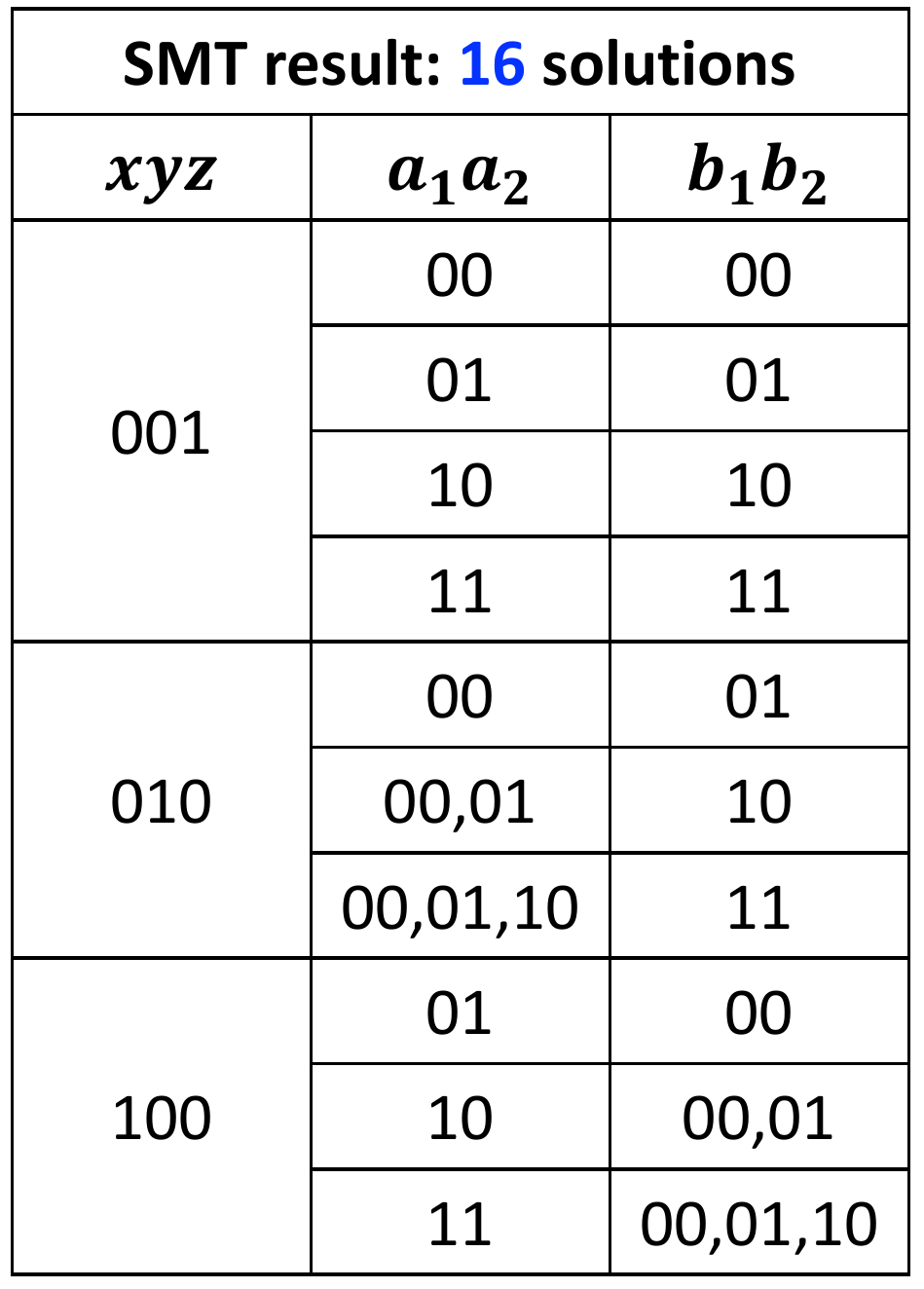}
\end{minipage} \\[1mm]
\begin{minipage}{0.6\linewidth}
\centering
\footnotesize
(a) Our oracle $\Psi$
\end{minipage}
~
\begin{minipage}{0.35\linewidth}
\centering
\footnotesize
(b) Solutions
\end{minipage}
\caption{Structure of our oracle $\Psi$}
\label{fig:Oracle_High-level}
\end{figure}


We have also proved the correctness of each component design. The rest of this paper is organized as follows. 
Section~\ref{sec:Preliminary} reviews preliminary concepts about quantum computing.
The proposed quantum SMT solver based on Grover's algorithm for formulas of the bit-vector theory is introduced in Section~\ref{sec:Methodology}. 
Evaluations of our approach are given in Section~\ref{sec:Evaluation}. Related works are discussed in Section~\ref{sec:RelatedWorks}.
The conclusion and our future works are discussed in Section~\ref{sec:Conclusion}.

\section{Preliminary} \label{sec:Preliminary}

We assume that our readers have basic concepts in quantum computing, e.g., the {\em tensor product} operation, {\em inner product} operation, {\em outer product} operation, and primitive quantum gates such as $\mathtt{X}$, $\mathtt{Z}$, $\mathtt{H}$, $\mathtt{CCNOT}$, etc. We use the {\em ket} notation $|\cdot \rangle$ to denote the (column) vector representing the state of a quantum system, and the {\em bra} notation $\langle \cdot |$ to denote its conjugate transpose. Given two vectors $|v_1\rangle$ and $|v_2\rangle$, we use $\langle v_1 | v_2 \rangle$ to denote their inner product, $|v_1\rangle \langle v_2|$ for their outer product, and $|v_1\rangle \otimes |v_2\rangle$ for their tensor product. For simplicity, we may write $|v_1\rangle \otimes |v_2\rangle$ as $|v_1,v_2\rangle$, or even $|v_1 v_2\rangle$.

\subsection{Grover’s algorithm}
Grover’s algorithm~\cite{G96} is one of the most famous quantum algorithms. It is used to solve the searching problem for finding one target element in a disordered database with $N$ elements. 
Due to the characteristic of parallel computation in quantum systems, Grover's algorithm takes $O(\sqrt{N})$ operations to find the target element, which is a quadratic speed up compared with classical methods requiring $O(N)$ operations.
Grover’s algorithm is widely used in many applications, such as cryptography \cite{GL16}, pattern matching \cite{TN22}, etc.

\begin{figure}[tb]
    \centering
    \includegraphics[width=0.92\linewidth]{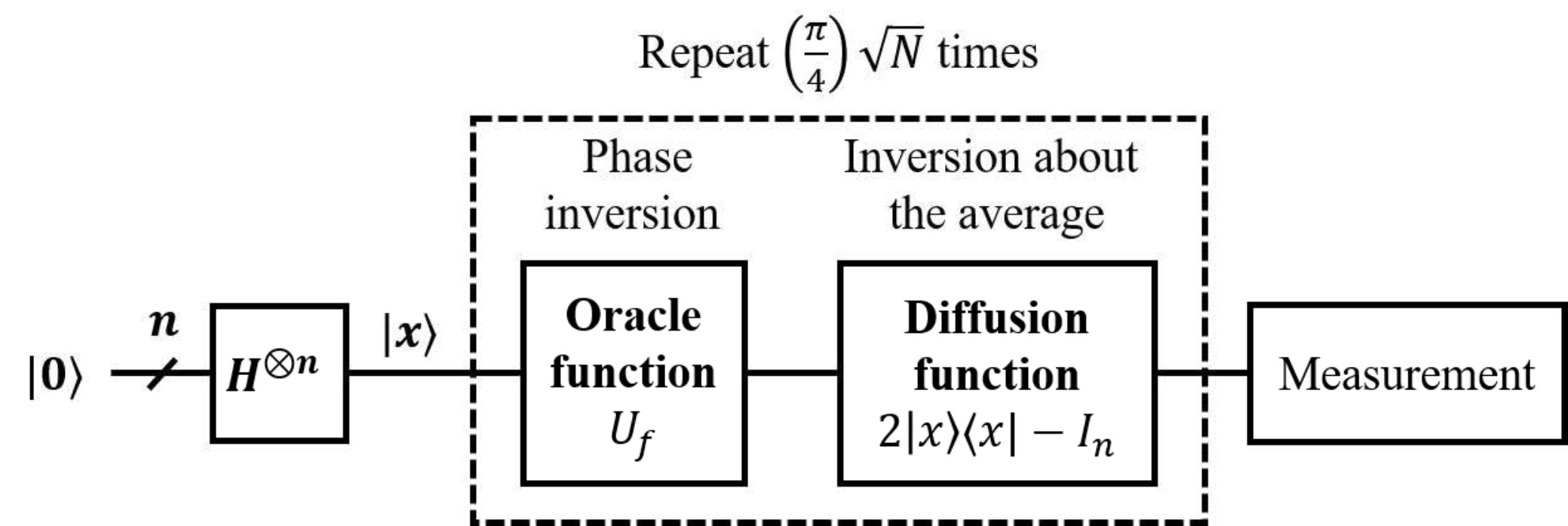}
    \caption{Grover's algorithm}
    \label{fig:grover_implement}
\end{figure}

The overall structure of Grover’s algorithm is shown in Fig.~\ref{fig:grover_implement}. The two main operations of it are {\em phase inversion} and {\em inversion about the average}, which are handled by the oracle and diffuser, respectively. 
Initially, the input will be placed in superposition ($|x\rangle$) to evaluate all elements in the database at once. 
Next, the oracle function $U_f$ considers all the possible inputs and marks the target element by applying phase inversion, i.e., $U_f |x\rangle = (-1)^{f(x)} |x\rangle$, in which $f(x) = 1$ for the target element and $f(x) = 0$ for the others.
The oracle is problem-dependent, while the diffuser is not. Thus, designing the correct oracle is the key to apply Grover’s algorithm. 
After the target element being marked, the diffuser applies the {\em inversion about the mean} operation, to amplify the probability of the target element, so that one can obtain the result by measurement.
In order to achieve the optimal probability for the target element to be measured, the two operations (called a Grover iteration) need to repeat $(\pi / 4) \sqrt{N}$ iterations .

Grover’s algorithm can also be deployed for problems with multiple target elements. 
In such a case, the number of required iterations becomes $(\pi / 4) \sqrt{N/M}$, where $M$ is the number of target elements. 
In addition, one can measure the correct result only when the number of the target elements is less than that of the nontarget elements (i.e. $M/N < 50\%$) due to the natural property of the algorithm~\cite{Y08}.
Since the number of target elements is usually unknown before the searching, there are several ways to solve this issue.
The most common one is to apply quantum counting~\cite{BH98} to obtain the (approximate) number of target elements before using Grover’s algorithm. After knowing the ratio of $M/N$, if it is more than $50\%$, one can double the search space by adding $N$ nontarget elements, so the target elements will always be less than $50\%$ of the search space. Another way is to apply the advanced algorithm proposed by Boyer et al.~\cite{BB98}.

\subsection{SAT, SMT, and Theory of Fixed-Width Bit Vector}

A Boolean formula is constructed by logical operators such as negation ($\neg$), conjunction ($\wedge$) and disjunction ($\vee$) among Boolean variables. Given a Boolean formula, the Boolean satisfiability (SAT) problem asks if there exists an assignment of the Boolean variables that satisfies the formula. 

Satisfiability modulo theory (SMT), an extension of SAT, is the problem of determining whether a mathematical formula is satisfiable. A mathematical formula is composed of atoms (predicates) connected by Boolean logical operators (e.g., formula $F$ in Section~\ref{sec:Introduction}). The type of atoms depends on the theory used in the mathematical formula. 

\begin{figure}[tb]
\centering
\[
\begin{matrix}
    E & \simeq & \Tilde{a} \mid \mathcal{C} \mid \; E_1 \odot E_2 & \mbox{where} & \odot \mbox{ is an arithmetic operator} \\
    atom & \simeq & E_1 \; \rhd \; E_2 & \mbox{where} & \rhd \in \{<, >, =, \geq, \leq, \neq \}
\end{matrix}
\]
\caption{Syntax of $\mathcal{BV}$ atoms}
\label{fig:BV-Syntax}
\end{figure}

In this work, we focus on the theory of quantifier free fixed-width bit-vectors (c.f.~Section~4.2.5 in \cite{S07}), denoted by $\mathcal{BV}$ for short. The syntax of $\mathcal{BV}$ atoms is given in Fig.~\ref{fig:BV-Syntax}. An expression $E$ could be a variable $\Tilde{a}$ with $n$ bits, a constant $\mathcal{C}$, or the result of an arithmetic operation between two expressions $E_1$ and $E_2$. Notice that once the data width ($n$ bits) is determined, it is fixed for all expressions. An atom is composed of two expressions and one comparison operator $\rhd$ in between, where $\rhd \in \{<, >, =, \geq, \leq, \neq \}$. The arithmetic operator includes word-concatenation, modulo-sum, modulo-multiplication, bitwise operation, shift operation, etc.



\section{Methodology} \label{sec:Methodology}

\begin{figure}[tb]
    \centering
    \includegraphics[width=\linewidth]{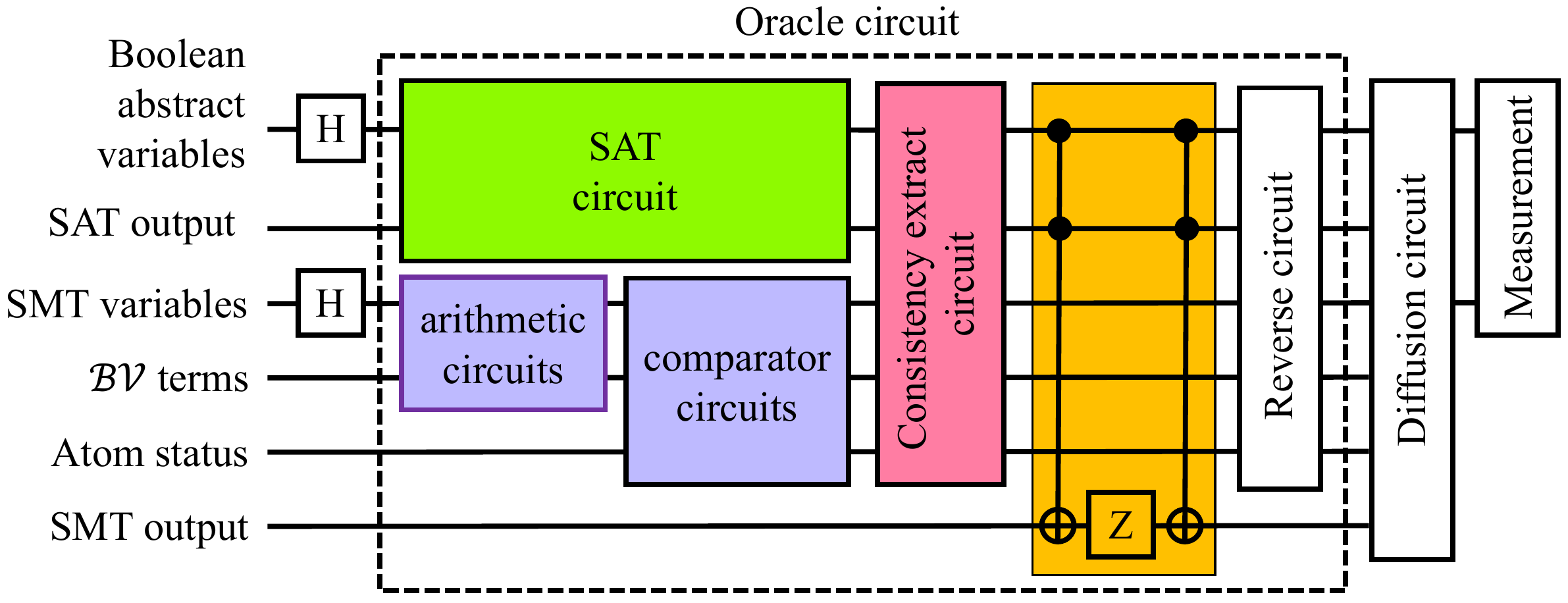}
    \caption{Quantum circuit for our $\mathcal{BV}$ SMT solver}
    \label{fig:SMT_circuit}
\end{figure}

In this section, we introduce our quantum SMT solver for the $\mathcal{BV}$ theory based on Grover's algorithm. Fig.~\ref{fig:SMT_circuit} shows its block diagram, echoing the overall structure in Fig~\ref{fig:Oracle_High-level}~{(a)}. As mentioned in Section~\ref{sec:Introduction}, the oracle is the key component for marking the solutions so that the difusser knows which elements to increase/maximize their probability to be measured. Since the diffuser is standard and independent from the search problem, we will not discuss it further. Instead, in the following, we put emphasise on how to construct the oracle.

\subsection{SAT Circuit} \label{subsec:SAT-Circuit-Degisn}

Fernandes et al.~\cite{FS19} showed an example of constructing the oracle circuit for a $3$-SAT problem and solved the problem based on Grover’s algorithm. However, they did not provide a systematic way to construct the oracle circuit. In this section, we propose a method to constructively generate the oracle circuit for an arbitrary $3$-SAT formula and prove its correctness. Consider the following grammar for $3$-SAT formulas in conjunctive normal form (CNF):
\[
\begin{matrix}
    F & \simeq & C \mid F_1 \wedge F_2 \\
    C & \simeq & l_1 \vee l_2 \vee l_3 \\
    l & \simeq & v \mid \neg v \mid 0
\end{matrix}
\]

Fig.~\ref{fig:SAT_clause_circuit} shows how to construct a quantum circuit for a clause $C: l_1 \vee l_2 \vee l_3$. Notice that the $Q$ gate depends on each literal $l_i$. If $l_i$ is a negative literal $\neg v_i$, $Q$ would be the $\mathtt{I}$ gate, i.e., the identity gate; otherwise, $Q$ would be the $\mathtt{X}$ gate, also called the $\mathtt{NOT}$ gate. The qubit $q_a(C)$ is an ancilla bit\footnote{The notation $q_a(C)$ is not a function application. It just represents that $q_a$ is the ancilla bit of clause $C$. Similarly, $q_o(C)$ represents the output bit of clause $C$, and $q_o(F)$ represents the output bit of formula $F$.} for internal computation, and $q_o(C)$ is the output bit for the truth value of $C$. Theorem~\ref{thm:SAT-Clause} proves that the quantum circuit construction for a clause $C$ is correct.

\begin{theorem} \label{thm:SAT-Clause}
{\bf [Clause Correctness]} (1) $q_o'(C) = 1$ $\iff$ clause $C$ is true, (2) $q_a'(C) = 0$, and (3) $v_i' = v_i$ for all $i \in \{1,2,3\}$.
\end{theorem}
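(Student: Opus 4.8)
The plan is to verify all three claims by tracking the evolution of a computational basis state through the gates of the clause circuit in Fig.~\ref{fig:SAT_clause_circuit}. Since every gate involved ($\mathtt{X}$, $\mathtt{I}$, and $\mathtt{CCNOT}$) acts as a permutation on the computational basis, it suffices to fix an arbitrary input $|v_1 v_2 v_3\rangle$ together with the initialized scratch qubits $|q_a(C)=0\rangle$ and $|q_o(C)=0\rangle$; the general superposition case then follows by linearity. I would carry out the whole argument in terms of the classical bit values carried on each wire.

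First I would establish the key invariant produced by the input-transforming $Q$ gates: after applying $Q$ to $v_i$, the resulting bit $\hat v_i$ equals $1$ if and only if the literal $l_i$ is false. This is a two-case check matching the gate assignment stated before the theorem. For a positive literal $l_i = v_i$ the gate is $\mathtt{X}$, so $\hat v_i = \neg v_i$, which is $1$ exactly when $v_i = 0$, i.e. when $l_i$ is false; for a negative literal $l_i = \neg v_i$ the gate is $\mathtt{I}$, so $\hat v_i = v_i$, which is $1$ exactly when $v_i = 1$, i.e. again when $l_i$ is false. Hence in both cases $\hat v_i = \neg l_i$.

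Next I would push these transformed values through the two $\mathtt{CCNOT}$ gates and the ancilla $q_a(C)$. The first, controlled by $\hat v_1, \hat v_2$ with target $q_a(C)$, sets $q_a(C) = \hat v_1 \wedge \hat v_2$; the second, controlled by $q_a(C), \hat v_3$ with target $q_o(C)$, sets $q_o(C) = \hat v_1 \wedge \hat v_2 \wedge \hat v_3 = \neg l_1 \wedge \neg l_2 \wedge \neg l_3$, which by De Morgan equals $\neg(l_1 \vee l_2 \vee l_3) = \neg C$. The concluding $\mathtt{X}$ on $q_o(C)$ then gives $q_o'(C) = \neg\neg C = C$, establishing claim~(1). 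For claims~(2) and~(3) I would observe that the circuit uncomputes its scratch work: re-applying the $\mathtt{CCNOT}$ controlled by $\hat v_1, \hat v_2$ restores $q_a(C)$ to $0$ because $\mathtt{CCNOT}$ is self-inverse and its controls are unchanged in between, and re-applying each $Q$ gate restores $v_i' = v_i$ since $\mathtt{X}^2 = \mathtt{I}$ and $\mathtt{I}^2 = \mathtt{I}$.

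I expect the remaining subtleties to be bookkeeping rather than conceptual. The point requiring the most care is the ordering of the uncomputation: claim~(2) holds only if the ancilla-restoring $\mathtt{CCNOT}$ acts while $\hat v_1, \hat v_2$ still carry their transformed values, so the input-restoring $Q$ gates must be placed after the ancilla has been uncomputed; I would make this ordering explicit by reading it off the figure. A secondary edge case is the padding literal $l = 0$ permitted by the grammar: a constant-false literal satisfies $\neg l = 1$ unconditionally, so it contributes a logical $1$ to the conjunction above and leaves the De Morgan identity intact, and the three claims are therefore unaffected.
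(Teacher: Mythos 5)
Your proposal is correct and follows essentially the same route as the paper's proof: establish that each $Q$ gate outputs $\neg l_i$, propagate through the two $\mathtt{CCNOT}$ gates to get $\neg l_1 \wedge \neg l_2 \wedge \neg l_3 = \neg C$ on the output wire (the paper phrases this contrapositively as $q_o'(C)=0 \iff$ all literals false, then negates), and argue uncomputation of $q_a(C)$ and the $v_i$ via $\mathtt{CCNOT}$ and $Q$ being self-inverse. Your explicit treatment of the uncomputation ordering and the padding literal $l=0$ is slightly more careful than the paper's, but it is the same argument.
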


\begin{proof}
Given a clause $C: l_1 \vee l_2 \vee l_3$, since if $l_i$ is a negative literal $\neg v_i$, $Q$ would be the $\mathtt{I}$ gate; otherwise, $Q$ would be the $\mathtt{X}$ gate, we have $Q(v_i) = \neg l_i$, as the red notations in Fig.~\ref{fig:SAT_clause_circuit}. Let $a$ be the value of $q_a(C)$ after apply the first $\mathtt{CCNOT}$ gate on it.

To prove condition~${(1)}$: $q_o' (C) = 0 \Leftrightarrow (\neg l_3 = 1) \wedge (a = 1)$ $\iff$ $(\neg l_3 = 1) \wedge (\neg l_1 = 1) \wedge (\neg l_2 = 1)$. If we apply negation on both side, we have
$\neg (q_o' (C) = 0) \Leftrightarrow \neg ((\neg l_1 = 1) \wedge (\neg l_2 = 1) \wedge (\neg l_3 = 1))$. Thus,
$q_o' (C) = 1 \Leftrightarrow \neg (\neg l_1 = 1) \vee \neg (\neg l_2 = 1) \vee \neg (\neg l_3 = 1)$. That is, 
$q_o' (C) = 1 \Leftrightarrow (l_1 = 1) \vee (l_2 = 1) \vee (l_3 = 1)$.

To prove condition~${(2)}$: $q_a' (C) = (\neg l_1 \wedge \neg l_2) \oplus a$. Since $a$ is obtained by applying $\mathtt{CCNOT}$ gate on $q_a(C)$, with $\neg l_1$ and $\neg l_2$ as the control bits, we have
$q_a' (C) = (\neg l_1 \wedge \neg l_2) \oplus ((\neg l_1 \wedge \neg l_2) \oplus 0)$
$= (\neg l_1 \wedge \neg l_2) \oplus (\neg l_1 \wedge \neg l_2) = 0$.

To prove condition~${(3)}$: $v_i' = Q(\neg l_i) = Q(Q(v_i)) = v_i$, because $QQ = I$, as $Q$ is either $\mathtt{X}$ or $\mathtt{I}$ gate. \qed
\end{proof}

\begin{figure}[tb]
\centering
\begin{minipage}{0.56 \linewidth}
\centering
\includegraphics[width=0.9\linewidth]{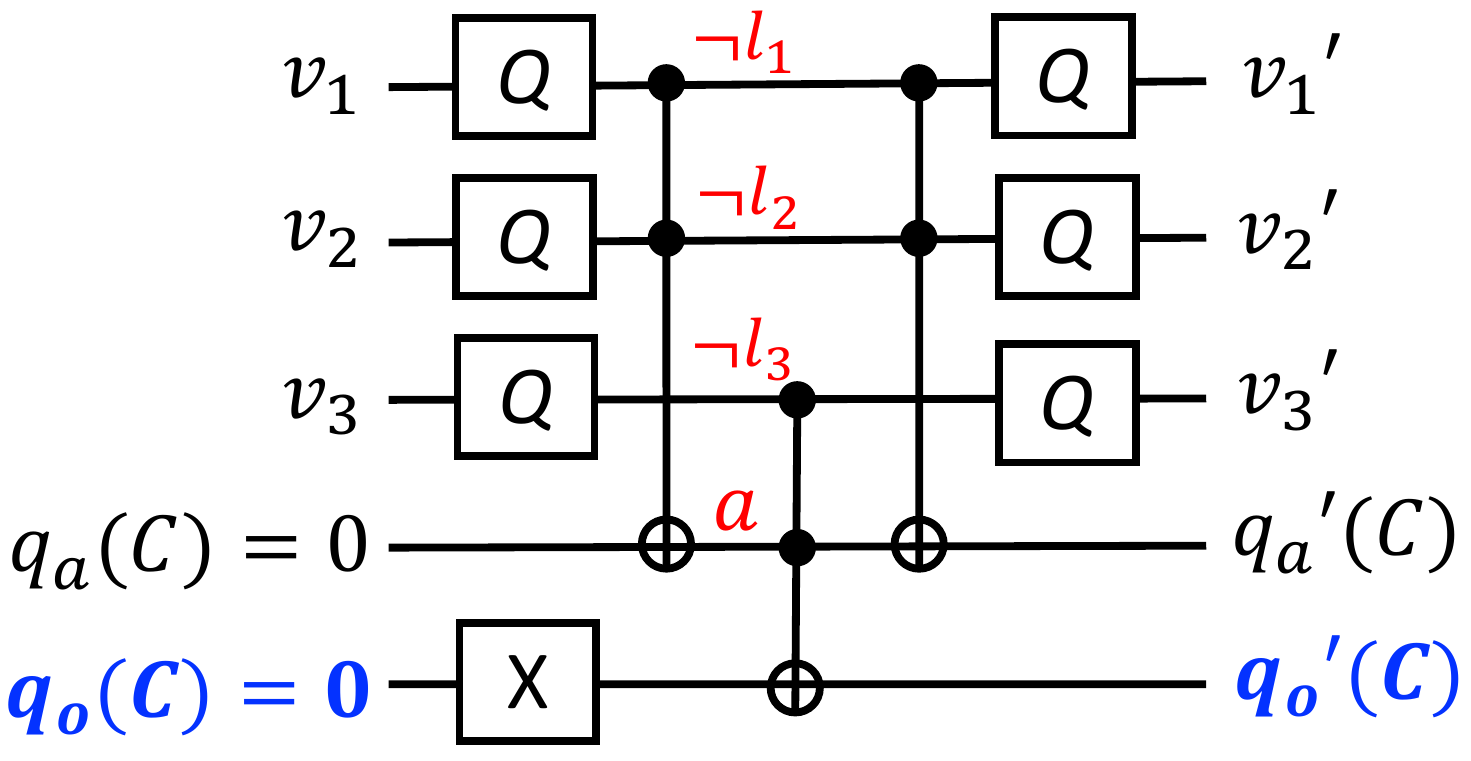}
\end{minipage}
~
\begin{minipage}{0.4\linewidth}
\small
\begin{displaymath}
    Q: \left\{
    \begin{array}{ll}
      \mathtt{X} \mbox{ , if} & l_i \mbox{ is } v_i \mbox{ or } 0 \\
      \mathtt{I} \mbox{ , if} & l_i \mbox{ is } \neg v_i
    \end{array}
    \right.
\end{displaymath}
\end{minipage}
\caption{Quantum circuit for a clause $C: l_1 \vee l_2 \vee l_3$}
\label{fig:SAT_clause_circuit}
\end{figure}

Fig.~\ref{fig:SAT_circuit} shows how to construct a quantum circuit for a formula $F: F_1 \wedge F_2$. It is required to construct the quantum circuits for $F_1$ and $F_2$ first, which are then conjuncted by a $\mathtt{CCNOT}$ gate. The qubit $q_o(F)$ is the output bit for the truth value of formula $F$. Theorem~\ref{thm:SAT-Formula} proves that the quantum circuit construction for a formula $F$ is correct.

\begin{theorem} \label{thm:SAT-Formula}
{\bf [Formula Correctness]} (1) $q_o'(F) = 1$ $\iff$ formula $F$ is true, (2) $q_a'(F) = 0$, and (3) $v_i' = v_i$ for all $i \in \{1,2, \ldots, n\}$.

\end{theorem}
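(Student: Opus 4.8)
The plan is to reduce the formula-level correctness to the already-established clause-level correctness (Theorem~\ref{thm:SAT-Clause}) and then handle the single combining $\mathtt{CCNOT}$ gate that conjuncts the two subcircuits. I would proceed by structural induction on the grammar $F \simeq C \mid F_1 \wedge F_2$. The base case, where $F$ is a single clause $C$, is precisely Theorem~\ref{thm:SAT-Clause}: condition (1) gives $q_o'(C)=1 \iff C$ is true, condition (2) gives $q_a'(C)=0$, and condition (3) gives $v_i'=v_i$, which are exactly the three claims specialized to a one-clause formula.

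For the inductive step, assume as the induction hypothesis that the subcircuits for $F_1$ and $F_2$ each satisfy the three conditions: their output bits $q_o'(F_1)$ and $q_o'(F_2)$ equal the truth values of $F_1$ and $F_2$ respectively, their ancilla bits are restored to $0$, and all variable qubits are preserved. First I would establish condition~(3) for $F$: since every variable qubit $v_i$ is touched only inside the $F_1$ subcircuit or the $F_2$ subcircuit (and the final $\mathtt{CCNOT}$ acts on $q_o(F_1)$ and $q_o(F_2)$ as controls and $q_o(F)$ as target, not on any $v_i$), the induction hypothesis applied to each subcircuit gives $v_i'=v_i$ for all $i$. Condition~(2) follows similarly: each subcircuit restores its own ancilla bits to $0$ by the induction hypothesis, and the combining $\mathtt{CCNOT}$ does not disturb them, so $q_a'(F)=0$.

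The core of the argument is condition~(1). The output bit $q_o(F)$ is assumed to start at $0$, and the final gate is a $\mathtt{CCNOT}$ with controls $q_o(F_1)$ and $q_o(F_2)$ targeting $q_o(F)$, so $q_o'(F) = \big(q_o'(F_1) \wedge q_o'(F_2)\big) \oplus 0 = q_o'(F_1) \wedge q_o'(F_2)$. By the induction hypothesis this equals (truth value of $F_1$) $\wedge$ (truth value of $F_2$), which is exactly the truth value of $F_1 \wedge F_2 = F$. Hence $q_o'(F)=1 \iff F$ is true.

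The main obstacle I anticipate is not any single calculation but a bookkeeping subtlety: the theorem implicitly assumes each output and ancilla qubit is initialized to $0$, and that the two subcircuits act on disjoint sets of ancilla and output qubits while sharing the variable qubits $v_i$. I would make these initialization and qubit-disjointness assumptions explicit, since condition~(2)'s "uncomputation" guarantee from Theorem~\ref{thm:SAT-Clause} is what lets the shared variable qubits and restored ancillas be reused cleanly across $F_1$ and $F_2$ without cross-contamination. Once that hygiene is stated, the induction closes routinely.
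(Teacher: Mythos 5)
Your proof is correct and follows essentially the same route as the paper: structural induction on the grammar with Theorem~\ref{thm:SAT-Clause} as the base case, the combining $\mathtt{CCNOT}$ giving $q_o'(F) = q_o'(F_1) \wedge q_o'(F_2)$ for condition~(1), and the induction hypothesis handling the ancilla restoration and variable preservation. Your explicit remark about qubit initialization and disjointness of the subcircuits' ancilla/output qubits is a reasonable hygiene point that the paper leaves implicit, but it does not change the argument.
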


\begin{proof}
We prove this theorem by structural induction on the grammar to generate an arbitrary formula $F$.
The basic case, when $F$ is a clause $C$, has been proved in Theorem~\ref{thm:SAT-Clause}.

Induction assumption : Let $F_1$ and $F_2$ be two formulas satisfying the three conditions: (a) $q_o'(F_1) = 1$ $\iff$ formula $F_1$ is true, and $q_o'(F_2) = 1$ $\iff$ formula $F_2$ is true. (b) $q_a'(F_1) = 0$ and $q_a'(F_2) = 0$. (c) $v_i' = v_i$ for $i \in \{1,2,\ldots, n\}$.

Consider the induction step, the circuit for $F: F_1 \wedge F_2$ is constructed based on $F_1$ and $F_2$, as shown in Figure \ref{fig:SAT_circuit}. 

To prove condition~{(1)}: Because of the $\mathtt{CCNOT}$ gate, $q_o' (F) = 1 \Leftrightarrow ((q_o' (F_1) = 1) \wedge (q_o' (F_2) = 1)) \oplus 0$ $ \Leftrightarrow (q_o' (F_1) = 1) \wedge (q_o' (F_2) = 1)$. Based on induction assumption~{(b)}, we can conclude that $q_o' (F) = 1$ $\iff$ $F_1$ is true and $F_2$ is true $\iff$ $F$ is true.

To prove condition~{(2)}: $q_a' (F) = q_a' (F_2) = 0$, according to the induction assumption~{(b)}.

To prove condition~{(3)}: Since $v_i'$ of $F$ is equal to $v_i'$ of $F_2$, based on induction assumption~{(c)}, $v_i' = v_i$ for all $i \in \{1,2,\ldots, n\}$. \qed
\end{proof}

\begin{figure}[tb]
    \centering
    \includegraphics[width=0.65\linewidth]{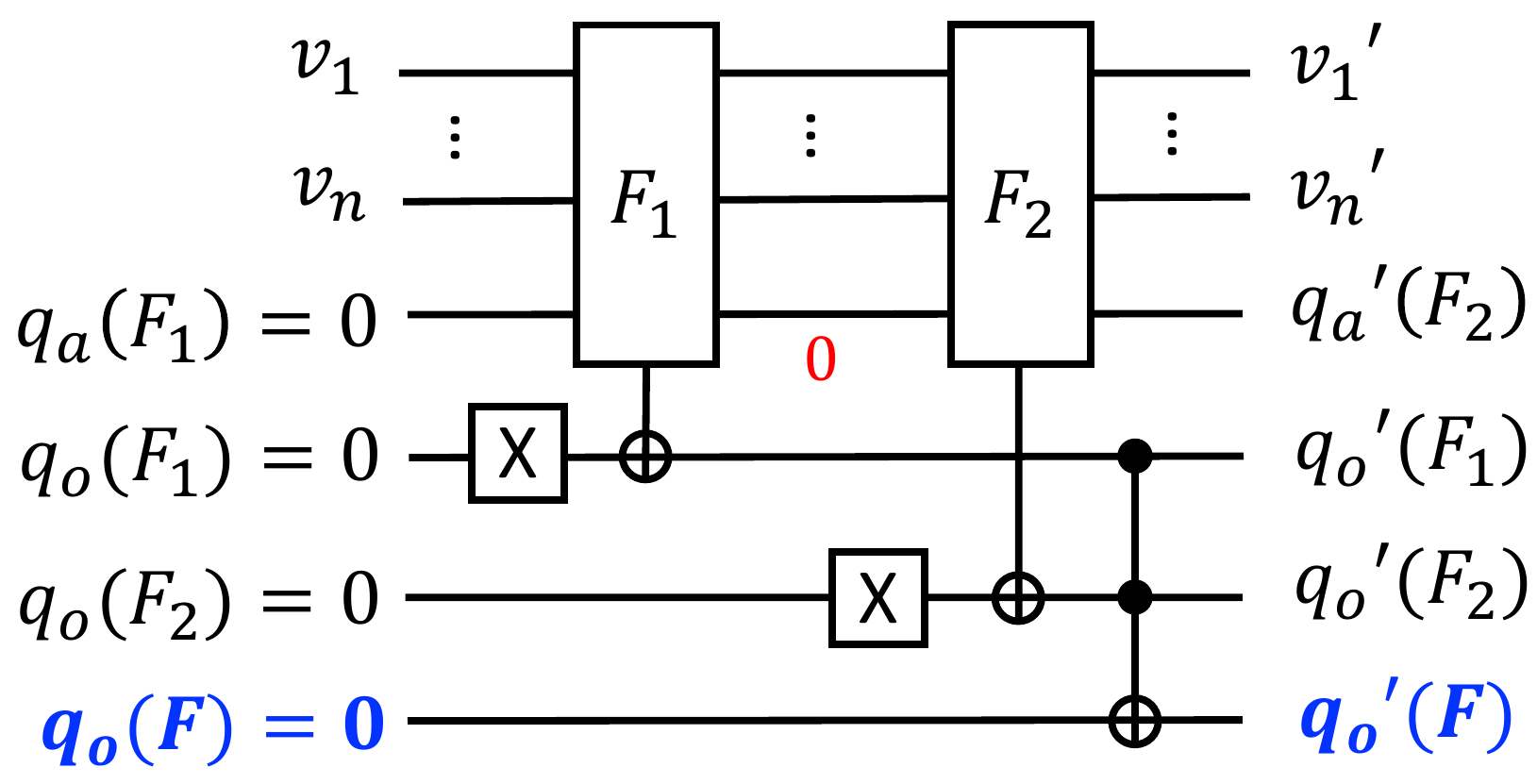}
    \caption{Quantum circuit construction for a formula $F$}
    \label{fig:SAT_circuit}
\end{figure}

\subsection{Theory Circuit} \label{subsec:Theory-Circuit}
Now, we illustrate how to construct the theory circuit for atoms consisting of arithmetic and comparison operations. Consider the syntax grammar for an arbitrary atom, as shown in Fig.~\ref{fig:BV-Syntax}. An expression could be a variable $\Tilde{a}$, a constant $\mathcal{C}$, or the result of an arithmetic operation between two expressions $E_1$ and $E_2$. Constructing the circuit for a variable $\Tilde{a}$ or a constant $\mathcal{C}$ is trivial, which is omitted here. The rest is the case of $\Tilde{a} \odot \Tilde{b}$. Here, we do not list all the arithmetic operations, as there are many. They can either be constructed based on primitive quantum gates or were developed in related works (c.f.~Section~\ref{sec:RelatedWorks}). Instead, we abstract the circuit for arithmetic operations as the general one, shown in Fig.~\ref{fig:arithmetic-comparator}~{(a)}, for easy illustration. Depending on different arithmetic operations, one can obtain its corresponding final circuit constructively in a bottom-up manner.

For each atom of the form $E_1 \rhd E_2$, we adopt the comparator~\cite{OR07} to compare $E_1$ and $E_2$. The overall structure of the comparator is shown in Fig.~\ref{fig:arithmetic-comparator}~{(b)}. Of course, the output $E_1 \odot E_2$ of the arithmetic circuit would be the input of the comparator circuit. The two output bits $O_1$ and $O_2$ of the comparator indicate the relation between $E_1$ and $E_2$, as follows. Notice that the case of $(1,1)$ does not exist in the comparator design~\cite{OR07}.

\begin{displaymath}
(O_1, O_2) = \left\{ 
\begin{matrix}
(0,1) & \mbox{if} & E_1 < E_2 \\
(1,0) & \mbox{if} & E_1 > E_2 \\
(0,0) & \mbox{if} & E_1 = E_2
\end{matrix}
\right.
\end{displaymath}

Based on the two output bits $O_1$ and $O_2$, one can construct the corresponding atom for each of the six different cases, as shown in Fig.~\ref{fig:atom_circuit}~{(a)}--{(f)}. Theorem~\ref{thm:Atom-correctness} shows the correctness of our circuit construction for atoms.

\begin{theorem} \label{thm:Atom-correctness}
{\bf [Atom correctness]} For each atom $E_1 \rhd E_2$ and its corresponding $atom$ bit, we have $atom = 1 \iff (E_1 \rhd E_2) = 1$.
\end{theorem}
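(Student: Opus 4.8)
The plan is to prove Theorem~\ref{thm:Atom-correctness} by a straightforward case analysis over the six comparison operators $\rhd \in \{<, >, =, \geq, \leq, \neq\}$, leveraging the specification of the comparator's output bits $(O_1, O_2)$ given just before the theorem statement. The comparator guarantees that $(O_1, O_2) = (0,1)$ iff $E_1 < E_2$, $(1,0)$ iff $E_1 > E_2$, and $(0,0)$ iff $E_1 = E_2$, with the case $(1,1)$ never occurring. Since the comparator output completely determines the relation between $E_1$ and $E_2$, the entire problem reduces to verifying that, for each of the six circuits in Fig.~\ref{fig:atom_circuit}~(a)--(f), the Boolean function computed into the $atom$ bit from $(O_1, O_2)$ agrees with the truth value of $E_1 \rhd E_2$ on exactly the three reachable input combinations $\{(0,0),(0,1),(1,0)\}$.

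First I would fix notation: let $f_\rhd(O_1, O_2)$ denote the Boolean function realized by the gate(s) feeding the $atom$ bit in the circuit corresponding to operator $\rhd$. For each operator I would then tabulate the desired value of $E_1 \rhd E_2$ against the three reachable $(O_1,O_2)$ patterns and confirm $f_\rhd$ matches. For example, for $=$ the atom should be true exactly when $(O_1,O_2)=(0,0)$, so $f_= = \neg O_1 \wedge \neg O_2$ (a doubly-negated-control Toffoli); for $<$ it should be true exactly when $(O_1,O_2)=(0,1)$, giving $f_< = \neg O_1 \wedge O_2$; for $>$ we need $(1,0)$, i.e.\ $f_> = O_1 \wedge \neg O_2$. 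The three remaining operators are the logical negations of these on the reachable domain: $\geq$ is $\neg(E_1 < E_2)$, so $atom=1$ iff $(O_1,O_2)\neq(0,1)$; $\leq$ is $\neg(E_1 > E_2)$, true iff $(O_1,O_2)\neq(1,0)$; and $\neq$ is $\neg(E_1 = E_2)$, true iff $(O_1,O_2)\neq(0,0)$. I would check each against its circuit and conclude $atom = 1 \iff (E_1 \rhd E_2) = 1$ in every case.

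The only subtlety worth flagging, rather than a genuine obstacle, is the role of the unreachable pattern $(1,1)$. Because the comparator of~\cite{OR07} never produces $(1,1)$, I only need agreement on the three reachable inputs, and I would state this explicitly so that the negation-based operators ($\geq, \leq, \neq$) are justified: on the reachable domain, $\neg(E_1 < E_2)$ coincides with $(E_1 > E_2) \vee (E_1 = E_2)$, and similarly for the others, even though a naive complement of the two-bit code might spuriously include $(1,1)$. Establishing once that the comparator restricts attention to $\{(0,0),(0,1),(1,0)\}$ lets all six subcases go through by a single line of Boolean evaluation each.

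The main work is thus bookkeeping rather than conceptual, and the hardest part is simply making sure each of the six circuit diagrams in Fig.~\ref{fig:atom_circuit} is read correctly so that the claimed $f_\rhd$ actually matches the depicted placement of $\mathtt{X}$ gates on the control lines into the Toffoli targeting the $atom$ bit. I expect no deep difficulty: given Theorem~\ref{thm:SAT-Clause}'s style, the proof is a finite verification that each circuit computes the correct indicator function of $\rhd$ over the three admissible comparator outputs, and the biconditional in the statement then follows immediately from the comparator specification.
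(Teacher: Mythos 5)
Your proposal is correct and follows essentially the same route the paper indicates: a finite case analysis of the $atom$ output for each of the six circuits in Fig.~\ref{fig:atom_circuit} against the comparator's $(O_1,O_2)$ specification. The paper omits the details entirely for space, so your write-up is strictly more complete, and your explicit observation that correctness only needs to hold on the three reachable patterns $\{(0,0),(0,1),(1,0)\}$ (justifying the negation-based operators $\geq$, $\leq$, $\neq$) is a worthwhile point the paper leaves implicit.
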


\begin{proof}
The proof is straightforward by examining the $atom$ output for each case based on the truth table of primitive quantum gates. We omit it here due to the page limit. \qed
\end{proof}


\begin{figure}[tb]
\begin{minipage}{0.47\linewidth}
\centering
\includegraphics[width=0.6\linewidth]{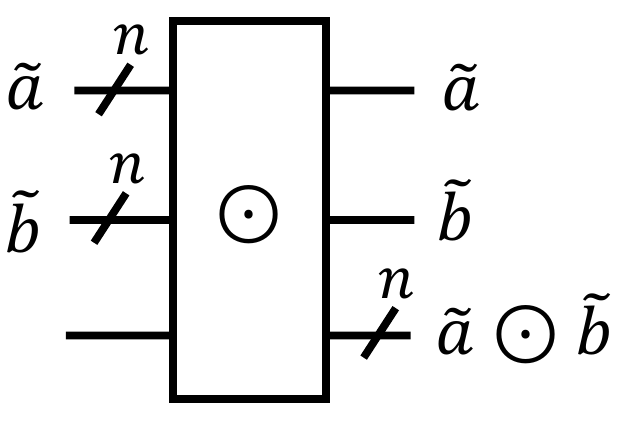} 
\end{minipage}
~
\begin{minipage}{0.47\linewidth}
\centering
\includegraphics[width=0.55\linewidth]{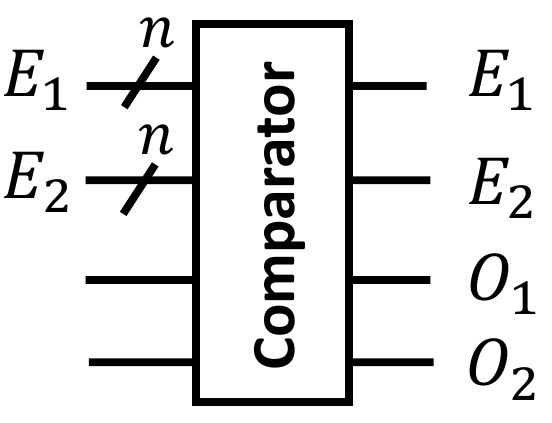} 
\end{minipage}
\\[2mm]
\begin{minipage}{0.47\linewidth}
\centering
\footnotesize
(a) Arithmetic Circuit
\end{minipage}
~
\begin{minipage}{0.47\linewidth}
\centering
\footnotesize
(b) Comparator Circuit
\end{minipage}
\caption{Theory Circuit}
\label{fig:arithmetic-comparator}
\end{figure}

\begin{figure}[tb]
\centering
\includegraphics[width=0.96\linewidth]{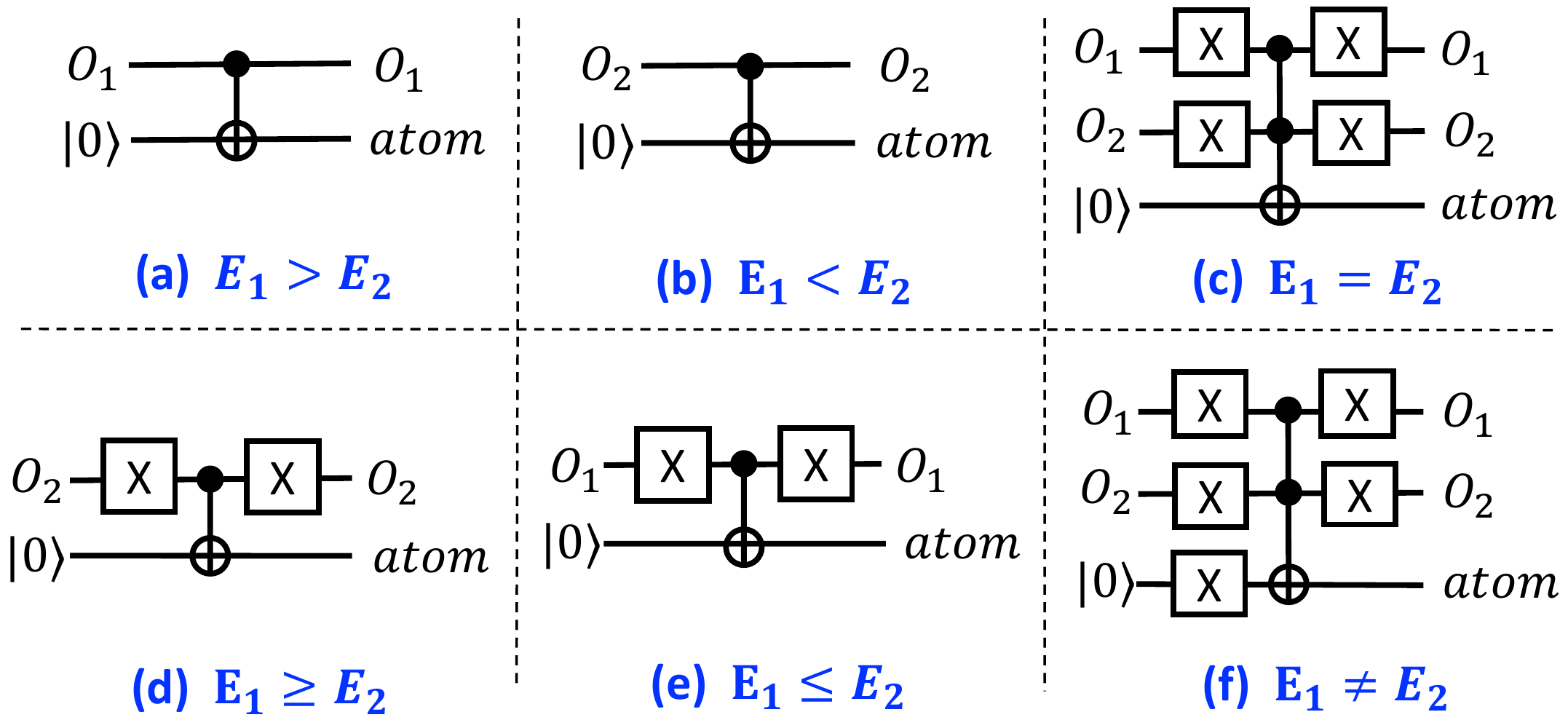}
\caption{Quantum circuit construction for atoms}
\label{fig:atom_circuit}
\end{figure}

\subsection{Consistency Extractor} \label{subsec:Consistency_Extractor}

The task of {\em consistency extractor} is to extract those assignments that are consistent in both Boolean and the bit-vector domains. That is, given an $atom_i$ with its Boolean abstract variable $v_{B_i}$, we want to make sure that $v_{B_i}' = 1$ iff $atom_i$ and $v_{B_i}$ are logical equivalent.


As shown in Fig.~\ref{fig:SMT_extract_circuit}~{(a)}, consistency extractor is composed of a $\mathtt{CNOT}$ gate and a $\mathtt{X}$ gate. The input $atom_i$ serves as the control bit of the $\mathtt{CNOT}$ gate, while the other input $v_{B_i}$ serves as the target bit, followed by a $\mathtt{X}$ gate to flip its result. With this design, the output qubit $v_{B_i}'$ would be $1$ iff $atom_i$ and $v_{B_i}$ have the same truth value. Theorem~\ref{thm:Consistency_Extractor_Correctness} proves the correctness of our quantum circuit for consistency extractor.


\begin{theorem} \label{thm:Consistency_Extractor_Correctness}
{\bf [Correctness of Consistency Extractor]} \\ 
($v_{Bi}' = 1) \iff (v_{B_i} \equiv \; atom_i)$.
\end{theorem}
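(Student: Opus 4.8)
The plan is to trace the quantum state through the two gates of the consistency extractor and compute the resulting value of the target qubit $v_{B_i}'$ symbolically in terms of the classical inputs $atom_i$ and $v_{B_i}$. Since both the $\mathtt{CNOT}$ gate and the $\mathtt{X}$ gate act as permutations on the computational basis, it suffices to reason on basis states, treating $atom_i, v_{B_i} \in \{0,1\}$ as classical bits; superposition inputs then follow by linearity.

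First I would recall the action of the $\mathtt{CNOT}$ gate with $atom_i$ as the control and $v_{B_i}$ as the target: it leaves the control untouched and replaces the target by $v_{B_i} \oplus atom_i$, the exclusive-or of the two inputs. Next, the subsequent $\mathtt{X}$ gate negates this value, so the output becomes $v_{B_i}' = \neg(v_{B_i} \oplus atom_i)$.

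The remaining step is to identify $\neg(v_{B_i} \oplus atom_i)$ with the logical equivalence $v_{B_i} \equiv atom_i$. This is the standard observation that the negated exclusive-or (XNOR) is exactly the equivalence connective: it evaluates to $1$ precisely when its two arguments agree. Hence $v_{B_i}' = 1 \iff v_{B_i} \oplus atom_i = 0 \iff v_{B_i} = atom_i \iff (v_{B_i} \equiv atom_i)$, which is the claim.

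I do not anticipate a genuine obstacle here: the entire argument is a one-line computation in Boolean algebra, and the only point worth stating carefully is the identity between XNOR and logical equivalence. If a more explicit justification is preferred, one could equally enumerate the four rows of the truth table over $(atom_i, v_{B_i})$ and observe that $v_{B_i}' = 1$ occurs exactly in the two rows where the inputs coincide.
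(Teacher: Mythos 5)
Your proposal is correct and follows essentially the same route as the paper: both compute that the $\mathtt{CNOT}$ produces $v_{B_i} \oplus atom_i$, the $\mathtt{X}$ gate negates it, and the result equals $1$ exactly when the XOR vanishes, i.e., when $v_{B_i}$ and $atom_i$ agree. The paper merely phrases the last step by expanding the two agreeing cases explicitly rather than invoking the XNOR--equivalence identity by name.
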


\begin{proof}
Let $a$ be the result of $v_{Bi}$ after applying the $\mathtt{CNOT}$ gate, as marked in red in Fig.~\ref{fig:SMT_extract_circuit}.
\begin{displaymath}
\begin{array}{lcl}
v'_{Bi} = 1 & \Leftrightarrow & a = 0 \;\; \Leftrightarrow \;\; (v_{Bi} \oplus atom_i) = 0 \\
            & \Leftrightarrow & (v_{Bi} = 0 \wedge atom_i = 0) \vee (v_{Bi} = 1 \wedge atom_i = 1) \\
            & \Leftrightarrow & v_{B_i} \; \equiv \; atom_i \qed
\end{array}
\end{displaymath}
\end{proof}

\begin{figure}[tb]
\begin{minipage}{0.4\linewidth}
\centering
\includegraphics[width=0.95\linewidth]{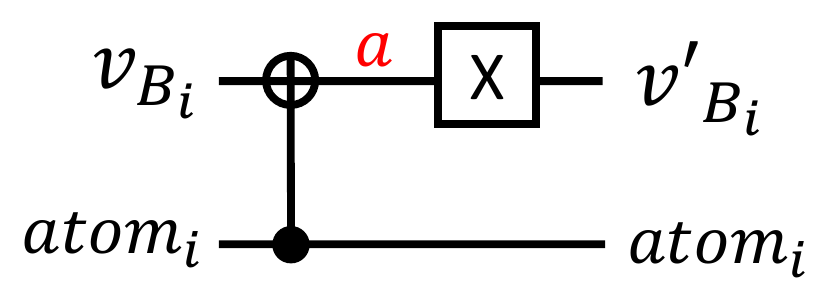}
\end{minipage}
~
\begin{minipage}{0.56\linewidth}
\centering
\includegraphics[width=0.98\linewidth]{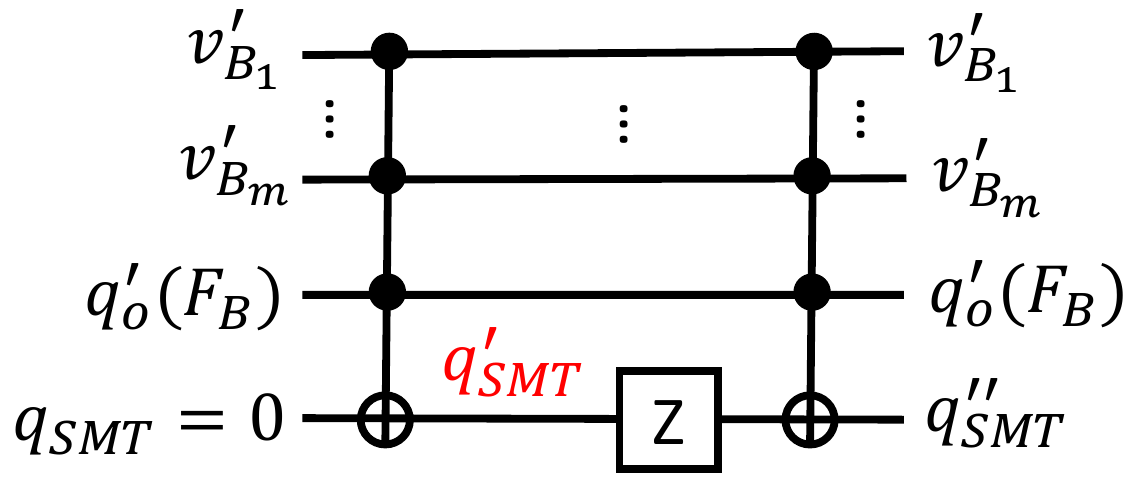}
\end{minipage} \\[2mm]
\begin{minipage}{0.4\linewidth}
\centering
\footnotesize
(a) Consistency Extractor
\end{minipage}
~
\begin{minipage}{0.56\linewidth}
\centering
\footnotesize
(b) Solution Inverter
\end{minipage}
\caption{Consistency Extractor and Solution Inverter}
\label{fig:SMT_extract_circuit}
\end{figure}

\subsection{Solution Inverter} \label{subsec:solution_inverter}

{\em Solution inverter} is the key component of the oracle. It marks the solutions to the SMT formula by inversing them, i.e., giving them a ``$-1$'' phase such that the diffuser knows which elements to increase/maximize their probability being measured.

The circuit for solution inverter, as shown in Fig.~\ref{fig:SMT_extract_circuit}~{(b)}, is composed of two $(m+1)$-$\mathtt{CNOT}$ gates and a $\mathtt{Z}$ gate, where $m$ is the number of Boolean abstract variables. Theorem~\ref{thm:Solution_Inverter_Correctness} proves the correctness of solution inverter.

\begin{theorem} \label{thm:Solution_Inverter_Correctness}
{\bf [Correctness of Solution Inverter]} \\
(1) $q'_{\mathtt{SMT}} = 1$ iff $q'_o(F_B) = 1$ and $v_{B_i} \equiv atom_i$, $\forall i \in \{1,2,\ldots, m\}$. \\
(2) All the solutions are added a ``$-1$'' phase.
\end{theorem}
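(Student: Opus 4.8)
The plan is to trace the solution inverter circuit gate by gate, following the ancilla qubit $q_{\mathtt{SMT}}$ (which, being an ancilla, starts in $|0\rangle$) and the phase of each basis state in the superposition. The $m+1$ control bits feeding the two $(m+1)$-$\mathtt{CNOT}$ gates are the SAT output $q'_o(F_B)$ together with the $m$ consistency-extractor outputs $v'_{B_1}, \ldots, v'_{B_m}$, and both multi-controlled gates target $q_{\mathtt{SMT}}$.

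For part~(1), I would first observe that after the first $(m+1)$-$\mathtt{CNOT}$ gate the ancilla satisfies $q'_{\mathtt{SMT}} = q'_o(F_B) \wedge v'_{B_1} \wedge \cdots \wedge v'_{B_m}$, since a multi-controlled $\mathtt{NOT}$ flips a target initialized to $0$ exactly when all of its controls are $1$. Hence $q'_{\mathtt{SMT}} = 1$ iff $q'_o(F_B) = 1$ and $v'_{B_i} = 1$ for every $i$. Invoking Theorem~\ref{thm:Consistency_Extractor_Correctness}, each $v'_{B_i} = 1$ is equivalent to $v_{B_i} \equiv atom_i$, which yields the stated condition verbatim; Theorem~\ref{thm:SAT-Formula} additionally certifies that the conjunct $q'_o(F_B) = 1$ holds exactly when $F_B$ is satisfied, giving the conditions their intended semantic reading.

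For part~(2), I would note that the $\mathtt{Z}$ gate acts as $\mathtt{Z}|b\rangle = (-1)^b |b\rangle$, so it contributes a factor of $-1$ to precisely those basis states with $q'_{\mathtt{SMT}} = 1$, i.e.\ exactly the SMT solutions characterized in part~(1). Because this phase is attached to each basis state independently, it is applied correctly even when the input register is in superposition, which is what the diffuser needs. The remaining task is to confirm that the second $(m+1)$-$\mathtt{CNOT}$ restores $q_{\mathtt{SMT}}$ to $0$: since the $\mathtt{Z}$ gate alters only the phase and leaves all $m+1$ control values unchanged, the second multi-controlled $\mathtt{NOT}$ fires on the same condition as the first and thus inverts the ancilla back to $0$, disentangling it while the $-1$ phase remains in place.

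The part I expect to require the most care is this uncomputation argument: I must verify that the $\mathtt{Z}$ gate does not disturb the control bits, so that the two multi-controlled gates are genuinely conditioned on identical values and cancel on the ancilla, and that the phase survives the second gate untouched. Everything else reduces to the Boolean bookkeeping of a single multi-controlled $\mathtt{NOT}$ together with direct appeals to Theorems~\ref{thm:SAT-Formula} and~\ref{thm:Consistency_Extractor_Correctness}.
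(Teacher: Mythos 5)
Your proof is correct and takes essentially the same route as the paper's: part~(1) is the identical Boolean bookkeeping for the first $(m+1)$-$\mathtt{CNOT}$ acting on the zero-initialized ancilla followed by an appeal to Theorem~\ref{thm:Consistency_Extractor_Correctness}, and part~(2) is the same observation that the $\mathtt{Z}$ gate attaches a $-1$ phase to exactly the basis states with $q'_{\mathtt{SMT}} = 1$, i.e.\ the set the paper calls $S_1$. Your extra verification that the second $(m+1)$-$\mathtt{CNOT}$ uncomputes the ancilla while leaving the phase intact goes slightly beyond the paper's proof, which defers that restoration to the reverse-circuit discussion; the addition is correct and harmless.
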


\begin{proof}
Let $q'_{\mathtt{SMT}}$ be the value of the $q_{\mathtt{SMT}}$ bit after applying the first $(m+1)$-$\mathtt{CNOT}$ gate, as marked in red in Fig.~\ref{fig:SMT_extract_circuit}~{(b)}.
Because $q_{\mathtt{SMT}}$ (initialized as $0$) is the target bit of the $(m+1)$-$\mathtt{CNOT}$ gate, provided that $v_{B_i}'$ is one of the $m$ controlled bit, we know that
$q'_{SMT} = 1 \Leftrightarrow (v'_{B_1} \wedge v'_{B_2} \wedge ... \wedge v'_{B_m} \wedge q_o'(F_B)) = 1$. By Theorem~\ref{thm:Consistency_Extractor_Correctness}, we can conclude that condition~{(1)} holds.

To prove condition~{(2)}, let us consider the state of the quantum circuit. Let $| v'_{B_1}, \ldots, v'_{B_m}, \ldots, q_o'(F_B), q'_{\mathtt{SMT}} \rangle$ be the quantum state after applying the first $(m+1)$-$\mathtt{CNOT}$ gate. As the value of $q'_{\mathtt{SMT}}$ could be $0$ or $1$, the system space $S$ can be split into two disjoint sets $S_0$ and $S_1$. That is, $S = S_0 \cup S_1$, where $S_0 = \{| v'_{B_1}, \ldots, v'_{B_m}, \ldots, q_o'(F_B), 0 \rangle \}$ and $S_1 = \{| v'_{B_1}, \ldots, v'_{B_m}, \ldots, q_o'(F_B), 1 \rangle \}$. By condition~{(1)} we just proved, $S_1$ is exactly the set of all solutions to the SMT problem. After that, a $\mathtt{Z}$ gate is applied on the $q'_{\mathtt{SMT}}$ bit. Since $\mathtt{Z}(|0\rangle) = |0\rangle$ and $\mathtt{Z}(|1\rangle) = -1 |1\rangle$, the $Z$ gate will give a ``$-1$'' phase for each element in $S_1$, i.e., all solutions are added a ``$-1$'' phase. \qed
\end{proof}

\subsection{Reverse circuit}
After inversing the solutions, the last part of the oracle function is the {\em reverse circuit}, which is constructed by the reversed circuits corresponding to the four aforementioned components. 
The purpose of the reverse circuit is to restore the qubits back to their original states, following the reversible nature of quantum computing.
In addition, Grover's algorithm may take several iterations to amplify the probability of solutions. Ancilla bits need to be restored to be used for the following iterations.

\begin{figure*}[htb]
\centering
\begin{minipage}{0.64\linewidth}
\includegraphics[width=\linewidth]{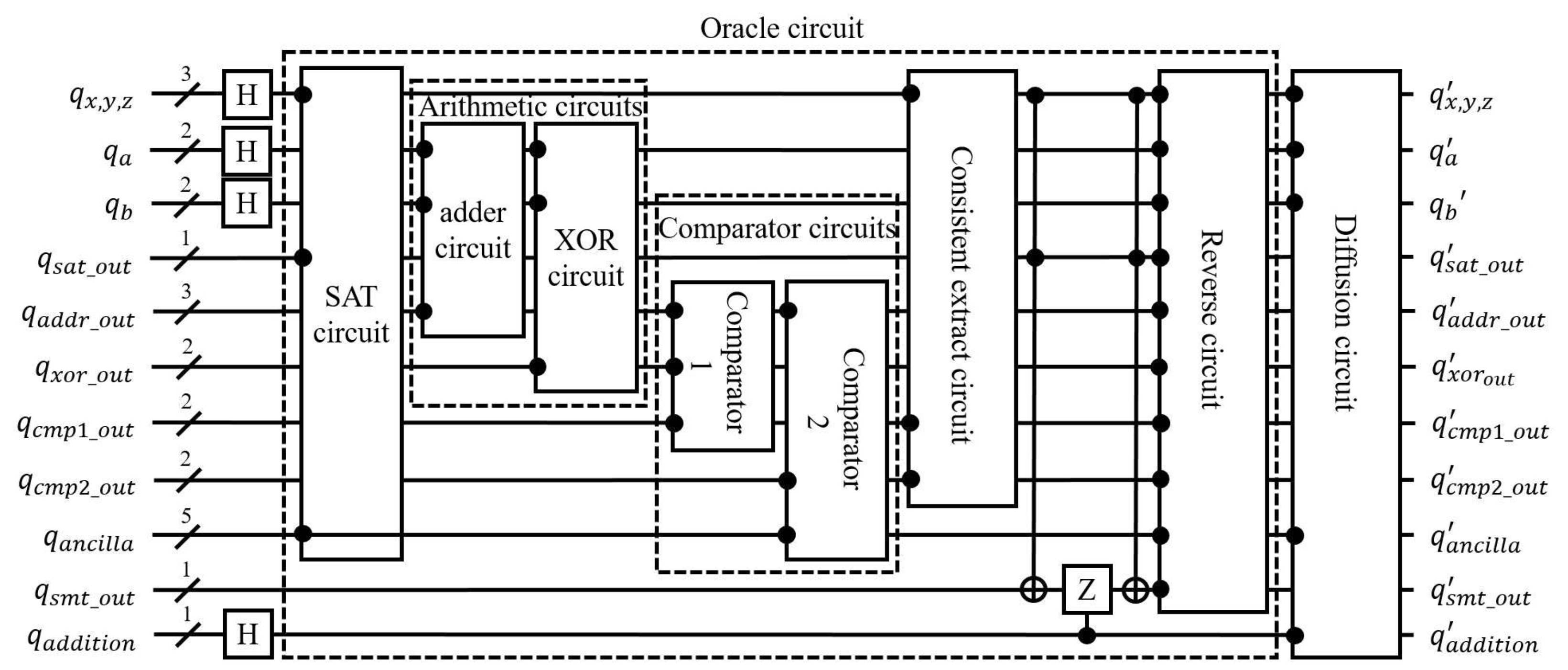}
\end{minipage}
~
\begin{minipage}{0.32\linewidth}
\includegraphics[width=\linewidth]{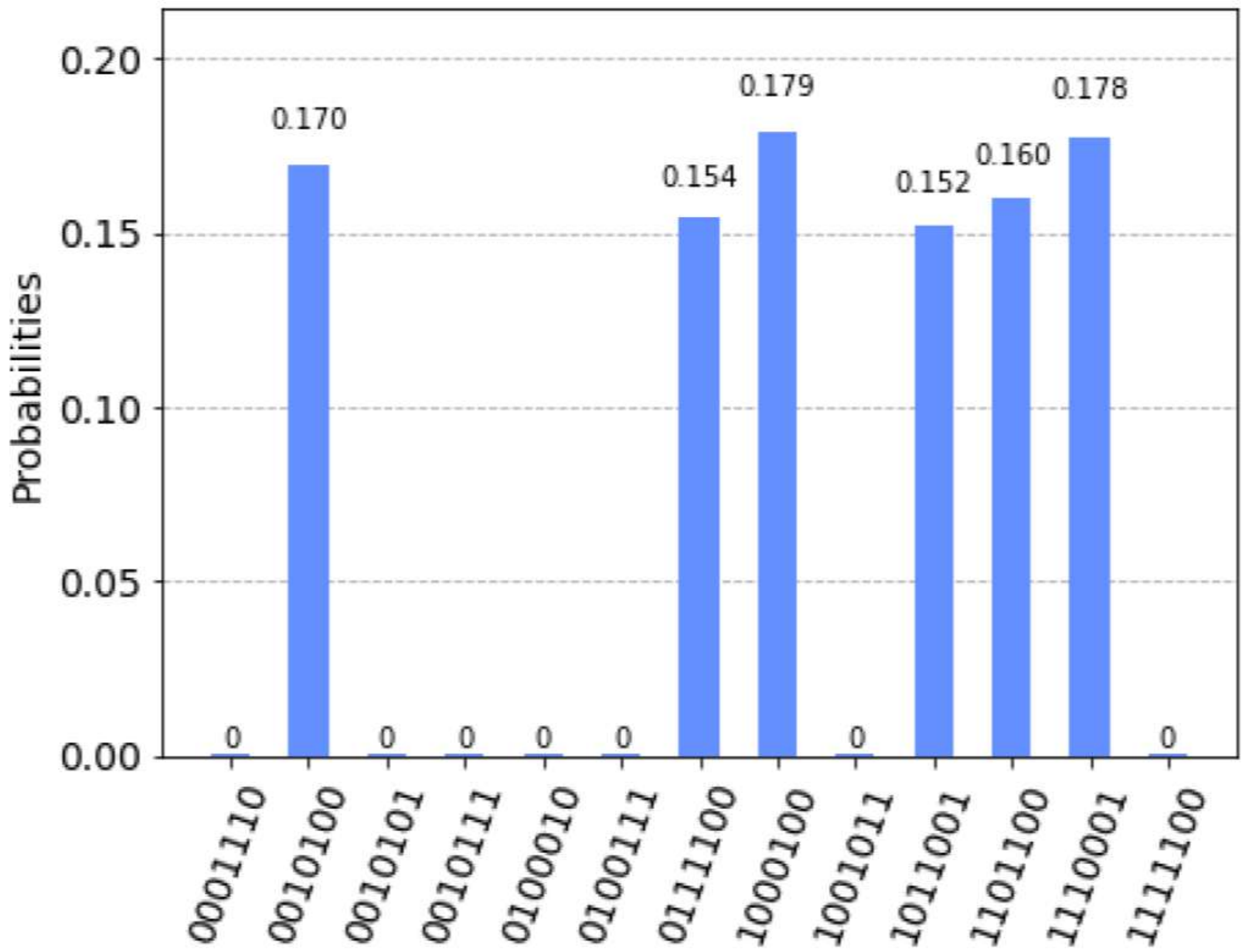}
\end{minipage} \\[1mm]
\begin{minipage}{0.62\linewidth}
\centering
(a)
\end{minipage}
~
\begin{minipage}{0.32\linewidth}
\centering
(b)
\end{minipage}
\caption{The quantum circuit of solving formula $\mathcal{F}$ and the simulation result}
\label{fig:evaluation_circuit}
\end{figure*}

\section{Evaluation} \label{sec:Evaluation}

In this section, we demonstrate our quantum circuit design for solving a $\mathcal{BV}$ SMT formula $\mathcal{F}$, whose Boolean abstract formula is
\[
\mathcal{F}_B: (x \vee y \vee z) \wedge (x \vee \neg y \vee z)
\]
with atom $x: (a + b < a \oplus b)$, $y: (a + b > a \oplus b)$, and $z: (a + b = 1)$, where variables $a$, $b$, $c$ are all $2$-bit long; `$+$' is the modulo-sum operation and `$\oplus$' is the exclusive-or operation.



Fig.~\ref{fig:evaluation_circuit}~{(a)} shows the block diagram of the quantum SMT solver for formula $\mathcal{F}$, in which only important qubits are shown. Internal qubits of each module are omitted. Although the SAT and theory (arithmetic + comparator) circuit are drawn sequentially in the diagram, they can operate in parallel in the realistic implementation since there are no dependency between them.
The main inputs are the three Boolean abstract variables $x, y, z$ and SMT variables $a$, $b$. They are placed in superposition by the Hadamard gate initially. To solve formula $\mathcal{F}$, two comparators are required. Comparator one is responsible for the relation between $(a + b)$ and $(a \oplus b)$, while comparator two is for $(a + b)$ and the constant $1$.

\begin{table}[tb]
\footnotesize
	\begin{center}
		\caption{List of qubits contain in circuit for solving $\mathcal{F}$}
		\label{tb:qubit_list}
		\begin{tabular}{ | c | c | c |}\hline
			Module & Qubit name & Amount \\ \hline
			\multirow{5}{*}{SMT} & Boolean abstract variable & 3 \\ \cline{2-3}
			& SMT variables & 2*2 \\ \cline{2-3}
		    & ancilla qubits & 5 \\ \cline{2-3}
		    & SMT output & 1 \\ \cline{2-3}
		    & addition qubit & 1 \\ \hline
		    \multirow{2}{*}{SAT} & SAT output & 1 \\ \cline{2-3}
		    & extra qubits & 2 \\ \hline
		    Adder & adder output & 3 \\ \hline
		    Bitwise XOR & bitwise XOR output & 2 \\ \hline
		    \multirow{3}{*}{Comparator 1 \& 2} & comparator output & 2*2 \\ \cline{2-3}
		    & comparator internal output & (2-1)*2*2 \\ \cline{2-3}
		    & comparator ancilla & 1*2 \\ \hline
		\end{tabular}
	\end{center}
\end{table}

The circuit was implemented and simulated in Qiskit~\cite{GT19}. The whole circuit requires $32$ qubits, which already reaches the maximum number of qubits that Qiskit supports. The breakdown of the qubits required is listed in Table~\ref{tb:qubit_list}.
The simulation result, as shown in Fig.~\ref{fig:evaluation_circuit}~{(b)}, was obtained by performing five Grover iterations as one shot, repeated for $1,024$ shots to get $1,024$ measurements. 
One may wonder how we get the required number of iterations (i.e., $\frac{\pi}{4}\sqrt{N/M}$, c.f.~Section~\ref{sec:Preliminary}) for performing Grover's algorithm. Actually, this number could be obtained by quantum counting~\cite{BH98}. However, the avaliable $32$ qubits supported by Qiskit are not sufficient to perform quantum counting for this circuit. Thus, we started from one iteration and increased the number until the probability distribution of the measurements started to get worse. The turning point that we got is five, for this experiment. Actually, we have manually calculated the value of $\frac{\pi}{4}\sqrt{N/M}$ by enumerating the search space and the solutions. We found that it is consistent with the number, five.

As shown in Fig.~\ref{fig:evaluation_circuit}~{(b)}, we can observe six bit-strings with a higher probability compared to others. Table~\ref{tb:evaluation_solutions} interprets the six bit-strings back to the assignments of five main variables $x$, $y$, $z$, $a$, $b$. They are exactly the solutions to formula $\mathcal{F}$. 
The “Counts” column represents how many times the assignment was measured among the $1,024$ shots. According to the simulation result, our circuit provides a $99.32\%$ (the summation of ``Counts'' divided by $1024$) of probability of measuring the solutions. In addition, it is possible to obtain all the solutions within a reasonable number of measurement shots. 


\begin{table}[tb]
\scriptsize
	\begin{center}
		\caption{Solutions of $\mathcal{F}$ obtained by quantum circuit}
		\label{tb:evaluation_solutions}
		\begin{tabular}{ | c | c | c | c | c |}\hline
			Output & \multirow{2}{*}{Counts} & \multicolumn{3}{c|}{Assignments} \\ \cline{3-5}
			bit-string & & $(x,y,z)$ & $a$ & $b$ \\ \hline
			$0010100$   &   $174$  & $(0,0,1)$     & $01$  & $00$ \\ \hline
			$0011110$   &   $158$  & $(0,0,1)$	    & $11$ 	& $10$ \\ \hline
			$0010001$   &   $183$  & $(0,0,1)$		& $00$ 	& $01$ \\ \hline
			$1001101$   &   $156$  & $(1,0,0)$		& $11$ 	& $01$ \\ \hline
			$0011011$   &   $164$  & $(0,0,1)$		& $10$ 	& $11$ \\ \hline
			$1000111$   &   $182$  & $(1,0,0)$		& $01$ 	& $11$ \\ \hline
		\end{tabular}
	\end{center}
\end{table}

\section{Related Works} \label{sec:RelatedWorks}

\subsection{SAT circuit} \label{subsec:SAT-Circuit}
SAT is useful in a wide range of applications, but it is also a famous NP-Complete problem. There are many works, such as hardware implementation of SAT solvers, trying to improve the performance of solving SAT problems.
With the advantage of superposition in the quantum system, there exists a chance to speed up the process of SAT solving by using the quantum circuit. Alberto et al. \cite{AS07} proposed three algorithms, including the quantum Fredkin circuit, quantum register machine and quantum P system, to solve the SAT problem. By adopting the superposition nature of the quantum system to the input, the system can obtain all the results of all possible inputs in one evaluation. But the common problem of those methods is that it is difficult to measure the SAT result since the probability of SAT results will be very low if the number of satisfiable assignments is way less than the number of unsatisfiable assignments. Although the author proposed a non-unitary operator or there are other methods like the chaotic dynamic system \cite{MI03} to extract the SAT result, they are difficult to implement for now.

Another approach to solving the measurement problem is to adopt Grover’s search algorithm. Since the algorithm is used to find the target elements in a database by amplifying the measurement probability of those elements, it is suitable for searching the satisfiable assignments if those assignments can be marked properly in the oracle function. Fernandes et al.~\cite{FS19} provide the concept of constructing the oracle circuit of 3-SAT solving based on Grover’s algorithm.

\subsection{Quantum arithmetic and comparator circuit}

Chakrabarti et al.~\cite{CS08} proposed a quantum ripple carry adder that is based on the famous VBE adder~\cite{VB96} with the improvement of gate level reduction. Their proposed adder reorganizes the carry gates that generate carry bits to let some parts of the gates operate in parallel. On the other hand, they discard the series of sum and carry gates in the second part of the VBE adder, and replace those gates with $\mathtt{CNOT}$ gates to generate the summation result. The author also proposed the design of a quantum carry look ahead adder, but there are no benefits of qubit usage or gate level compared to the ripple carry adder. We adopt the ripple carry adder for modulo-sum operation in our circuit.

Kotiyal et al.~\cite{KT14} proposed a design of quantum multiplier focusing on reducing the ancilla inputs and garbage outputs since the usage of qubits is the main consideration of quantum circuit design. The advantage of their proposed multiplier is that they use the binary tree based design to handle partial product summation in parallel. Besides, they employ the adder without ancilla and garbage qubits proposed by Takahashi et al.~\cite{TT10} to reduce the caused by adders. The implementation result also showed that their multiplier has up to $60.34\%$ and $52.27\%$ improvement of ancilla and garbage qubits usage compared to other previous works.

Kotiyal et al. \cite{KT11} proposed a quantum bidirectional barrel shifter that can perform four shift operations, including logical right shift, logical left shift, arithmetic right shift and arithmetic left shift. In order to implement the 2to1 MUX, which is the major component of the barrel shifter, they adopt the Fredkin gate (i.e. $\mathtt{CSWAP}$ gate) for implementation by setting the control qubit as the select signal and one of the target qubits as the output. They also use the $\mathtt{CNOT}$ gate to handle signal fan out. Compared to previous works, their shifter has the reversal control unit to achieve the bidirectional shift even though the core module of the shifter only operates right shift operation.

Finally, the quantum comparator will be used to identify the atom’s status once the two $\mathcal{BV}$ terms are ready. The quantum comparator proposed by Oliveira et al. \cite{OR07} can indicate the relation between two input quantum bit strings $\mathbf{a}$ and $\mathbf{b}$. Their circuit, compared to another subtraction based comparator NKO, requires less resources and has higher computation parallelism.

\section{Conclusion} \label{sec:Conclusion}

In this paper, we proposed a quantum SMT solver for the $\mathcal{BV}$ theory, based on Grover’s algorithm. 
Our proposed approach not only provides a theoretically quadratic speedup compared to the traditional method, but also has the capability of discovering all solutions. 
Since solving SMT problems for other theories also consists of two parts (namely SAT and theory solving), our proposed oracle construction is general to be extended to other theories, which is our future work.

\bibliographystyle{plain}
\bibliography{reference-DAC}

\end{document}